  \renewcommand{\ALG@name}{Implementation procedure}
\newtheorem{thm}{Theorem}
\newtheorem{lemma}{Lemma}
\newtheorem{rem}{Remark}
\newtheorem{defin}{Definition}
\title{\LARGE \bf
Adaptive flexibility function in smart energy systems: A linearized price-demand mapping approach
}
\author{Seyed Shahabaldin Tohidi, Henrik Madsen, Georgios Tsaousoglou and Tobias K. S. Ritschel% <-this % stops a space
%\thanks{*This work was not supported by any organization}% <-this % stops a space
\thanks{$^{1}$Shahab Tohidi, Henrik Madsen, Georgios Tsaousoglou, and Tobias K. S. Ritschel are with Department of Applied  Mathematics and Computer Science, Technical University of Denmark, DK-2800 Kgs. Lyngby, Denmark
        {\tt\small \{sshto, hmad, geots, tobk\}@dtu.dk}}%
%\thanks{$^{2}$Bernard D. Researcheris with the Department of Electrical Engineering, Wright State University,
%        Dayton, OH 45435, USA
%        {\tt\small b.d.researcher@ieee.org}}%
}
\begin{document}

\maketitle
\thispagestyle{empty}
\pagestyle{empty}

%%%%%%%%%%%%%%%%%%%%%%%%%%%%%%%%%%%%%%%%%%%%%%%%%%%%%%%%%%%%%%%%%%%%%%%%%%%%%%%%
\begin{abstract}

This paper proposes an adaptive mechanism for price signal generation using a piecewise linear approximation of a flexibility function with unknown parameters. In this adaptive approach, the price signal is parameterized and the parameters are changed adaptively such that the output of the flexibility function follows the reference demand signal provided by the involved aggregator. This is guaranteed using the Lyapunov stability theorem. The proposed method does not require an estimation algorithm for unknown parameters, that eliminates the need for persistency of excitation of signals, and consequently, simplifies offering the flexibility services. Furthermore, boundedness of the price signal is ensured using a projection algorithm in the adaptive system. We present simulation results that demonstrate the price generation results using the proposed approaches.
\end{abstract}

%%%%%%%%%%%%%%%%%%%%%%%%%%%%%%%%%%%%%%%%%%%%%%%%%%%%%%%%%%%%%%%%%%%%%%%%%%%%%%%%
\section{INTRODUCTION}

Recent reports about climate change state a significant increment of earth surface temperature, known as global warming. State of the climate in Europe \cite{WMO22} establishes that Europe is the fastest-warming of the six regions defined by World Meteorological Organization (WMO). Since the 1980s, Earth's temperature has increased at a rate of +0.5${^{\circ}}$C per decade, which is more than twice the global average, in Europe. As reported in \cite{Rashit22}, mean temperature of arctic region ($>$60${^{\circ}}$ latitude) was 0.71${^{\circ}}$C higher than the average of the preceding decade. Therefore, it is necessary to look for an energy management solution to alleviate the energy consumption and shift to the green resources of energy.

Expanding the renewable energy sources, like solar and wind power, decentralizes the energy production. Consequently, the demand must be adjusted to meet the existing generated power \cite{Jaume23}. In short, the energy system is in a transition from a centralized system with relatively few power generation facilities to a decentralized system where the balance is ensured by demand-side response and local intelligent systems \cite{Rongling23, Rongling22, George23}. 

Demand Side Management (DSM) consists of various control strategies for load shifting, peak shaving, or demand reduction \cite{Pean19}. This requires the demand side profile to be flexible, that is, it should be capable of managing its demand and generation based on user needs, grid balancing and local climate conditions \cite{Jensen17, Lund15}. For example, flexibility potential of thermal dynamics of a building is dependent on its inherent thermal mass and storage options such as water tanks, along with the Heating, Ventilation and Air Conditioning (HVAC) system. Advanced control design has shown to have great potential for activating this flexibility potential \cite{Pean19, Rongling20}.

%Nowadays, the total share of renewable energy production capacity remains below 25$\%$ in the world, whereas it is 35$\%$ in Europe. So far in 2023 in Denmark, wind and solar power have accounted for 67\% of the power load. 
%Hierarchy of controllers in Smart Energy Systems (SES) simplifies the interaction between control levels leading to a better flexibility potential activation, performance improvement, and stability guarantee \cite{Parvizi18}. The hierarchy consists of various levels of nested control mechanisms as shown in Figure 1. Electricity market, balance responsible party, distribution system operator (DSO), and transmission system operator (TSO) are parts of the high level (macroscopic) control mechanisms. In the next control layer, there are aggregators, sub-aggregators and forecast services that consider regional energy systems. 
%In the neighborhood or building level, there are advanced controllers and optimization algorithms responsible for activating flexibility, reducing costs, etc. by using data from other levels of hierarchy. In the low level (microscopic), there are manual controls and simple control algorithms. In order to have an overall performance and stability through the smart energy system, it is necessary to keep the interaction between the levels.

The flexibility function, a mapping between price and demand over time in a price-responsive system, is proposed as a minimum interoperability mechanism (MIM) between the aggregator and the individual flexible assets (e.g., buildings) \cite{Rune18}. A generalized version of the flexibility function involving a nonlinear mapping between price and demand is provided in \cite{Rune20}. Specifically, the mapping describes the temporal evolution of the energy demand in response to changes in the energy price \cite{Domink20}. Therefore, it can be used for demand scheduling and load shifting.

The dynamics of a given energy system changes over time due to gradual deterioration, seasonal changes (e.g., in ambient temperature), consumer behavior (e.g., during holidays), etc. Therefore, the dynamics of the price-demand relationship changes as well. This motivates the design of a mechanism that accounts for such dynamic variations. The issue of parametric uncertainty can be mitigated using either passive or active approaches. Passive methods are based on robust fixed-structure control systems considering bounded parametric uncertainty \cite{bemporad07,zeilinger10, Doyle1998, liu22}. In contrast, active methods are based on adaptive control methods that adjust the control law based on the changes in system parameters \cite{GHOLAMI21, Lemos14, Pierre87, Tohidi20, Annaswamy12}.

%Interaction between MPC and higher levels of hierarchy, like aggregator and sub-aggregator, that purchase a specific amount of energy for the next 24 hours, can also be kept by employing the flexibility function \cite{Thilker12}. Furthermore, aggregators and sub-aggregators can benefit from flexibility functions since these functions provide an overview of demand dynamics of a region or neighborhood. Ancillary services can also be handled as an interaction between optimization algorithms and balance responsible parties \cite{DeZotti19, DeZotti18}.

%In particular for power systems, which are very fast acting systems, stability is very  important. The extent of the flexibility function and its applications in different levels of hierarchy necessitates the requirement for a mathematically rigorous stability analysis. Considering flexibility function and the interaction between hierarchy levels add nonlinear dynamics to the smart energy system, and should be taken into account. 

This paper proposes an adaptive flexibility controller capable of updating the control law, i.e. the price signal, based on the changes in the price-demand dynamics. To the best of our knowledge, existing methods based on the flexibility function do not account for parametric uncertainty. In this adaptive approach, the price signal is parameterized and the parameters are changed adaptively such that the demand closely follows the reference demand signal provided by an aggregator. Another benefit of employing this approach is that it is not based on system identification methods. Hence, it does not require any persistency of excitation assumption on the input signals \cite{aastrom13}. Moreover, a projection algorithm has been employed to confine the adaptive parameters within a prespecified compact set to guarantee the boundedness of the price signals \cite{Tohidi22a, Lavretsky13}. The adaptation capability simplifies offering the flexibility services e.g. in a plug-and-play manner and without the need to conduct a manual, customized modeling-and-control study for each resource separately.

This paper is organized as follows. Section \ref{sec:FF} provides an overview of the flexibility function considered in this work. Section \ref{sec:linFF} presents a linearized version of the flexibility function. Section \ref{sec:OPG} provides an optimal control signal assuming that all parameters are known. Considering unknown parameters, Section \ref{sec:AFF} proposes an adaptive flexibility function mechanism while ensuring the boundedness of the control signal. Section \ref{sec:sim} presents the simulation results, and a summary is provided in Section~\ref{sec:conc}.

%\begin{figure}
%\label{fig:Hierarchy}
%\centering
%\includegraphics[width=0.4\textwidth]{Picture1.jpg}
%\caption{Hierarchy of control mechanisms in smart energy systems.}
%\end{figure}

\section{FLEXIBILITY FUNCTION}\label{sec:FF}

Nonlinear dynamics of price-demand relationship proposed in \cite{Rune20} is in the following stochastic differential equation form
\begin{align}
d{\mathcal{X}}_t&=\frac{1}{C}(D_t-B_t)dt + \mathcal{X}_t(1-\mathcal{X}_t)\sigma_xdW_t,\label{eq:1}\\
\delta_t&=\ell(f(\mathcal{X}_t)+g(u_t)),\label{eq:2}\\
D_t&=B_t+\delta_t\Delta(\vmathbb{1}_{\delta_t>0}(1-B_t)+\vmathbb{1}_{\delta_t<0}B_t),\label{eq:3}\\
Y_t&=D_t+\sigma_Y\epsilon_t,\label{eq:4}
\end{align}
where $\mathcal{X}\in [0,\ 1]\subset \mathbb{R}$ is the state of charge, $B\in [0,\ 1]\subset \mathbb{R}$ is the baseline demand, $u\in [0,\ 1]\subset \mathbb{R}$ is the energy price, $\delta \in [0,\ 1]\subset \mathbb{R}$ is the demand change, $D\in [0,\ 1]\subset \mathbb{R}$ is the expected demand, $C$ is the capacity of flexible energy, $\Delta\in [0,\ 1]$ is the proportion of flexible demand, and $Y\in [0,\ 1]\subset \mathbb{R}$ is the demand output of the flexibility function. The above equations are constructed based on the normalised parameters between 0 and 1. $W$ is a Wiener process, i.e., $dW \sim N(0, dt)$, $\epsilon \sim N(0,1)$, $\sigma_X$ represent process noise intensity and $\sigma_Y$ is the standard deviation of the measurement noise. Moreover, the function $\vmathbb{1}_{\delta_t<0}$ is equal to 1 when $\delta_t<0$ and 0 otherwise, and the function $\vmathbb{1}_{\delta_t>0}$ is equal to 1 when $\delta_t>0$ and 0 otherwise. The nonlinear functions involved in the flexibility function are given by
\begin{align}
g(u) &= \beta_1 Is_1(u) + ~\cdots~ + \beta_7 Is_7(u),\label{eq:3y}\\
f(\mathcal{X}) &= (1-2\mathcal{X}+\alpha_1(1-(2\mathcal{X}-1)^2))(\alpha_2+\alpha_3(2\mathcal{X}-1)^2\notag \\
&+\alpha_4(2\mathcal{X}-1)^6),\label{eq:4y}\\
\ell(f(\mathcal{X})+ &g(u)) = -1 + \frac{2}{1+e^{-k(f(\mathcal{X})+g(u))}},\label{eq:5y}
%1(\delta>0) & = \frac{1}{1+e^{-\lambda \delta}},\\
%1(\delta<0) & = \frac{1}{1+e^{\lambda \delta}},
\end{align}
where $Is_1$, \ldots, $Is_7$ are I-spline functions \cite{Ramsay1988}, and the parameters $\beta_1, \ldots, \beta_7$, $\alpha_1, \ldots, \alpha_4$, and $k$ are assumed to be unknown. They can be identified using different approaches, e.g., by maximizing the likelihood of observing the the actual measurements \cite{Tohidi22b}. By design, the functions $f(.)$ and $g(.)$ are monotonically decreasing and $\ell(.)$ is monotonically increasing. 

In the sequel, we first disregard the diffusion term in \eqref{eq:1}--\eqref{eq:4} and linearize the deterministic flexibility function in Section \ref{sec:linFF}. Then, design two controllers are designed to generate price such that the demand follows its reference. In Section \ref{sec:OPG}, the controller is designed assuming that the parameters are known. In Section \ref{sec:AFF}, we design an adaptive price generator when some parameters of the system are not known.

%To simplify the problem, we first disregard the stochastic parts in the next two sections and then start using the exact system (\ref{eq:1})-(\ref{eq:4}) in Section \ref{sec:Stochastic}. Considering the deterministic parts of FF and substituting (\ref{eq:2}) and (\ref{eq:3}) into (\ref{eq:1}), it leads to
%\begin{align}
%d{X}_t/dt&=F({X}_t,D_t),\label{eq:5}\\
%Y_t&=D_t({X}_t,u_t)=G({X}_t,u_t).\label{eq:6}
%\end{align}

\section{DETERMINISTIC LINEARIZED FLEXIBILITY FUNCTION}\label{sec:linFF}

Disregarding the diffusion term in the flexibility function provides a simplified overview of the demand-response behaviour. This simplified version of flexibility function helps us understand the core of the dynamics of price-demand mapping. To this end, the flexibility function \eqref{eq:1}--\eqref{eq:4} can be modified as
\begin{align}
d\mathcal{X}_t/dt&=\frac{1}{C}(D_t-B_t),\label{eq:1x}\\
\delta_t&=\ell(f(\mathcal{X}_t)+g(u_t)),\label{eq:2x}\\
D_t&=B_t+\delta_t\Delta(\vmathbb{1}_{\delta_t>0}(1-B_t)+\vmathbb{1}_{\delta_t<0}B_t),\label{eq:3x}
%Y_t&=D_t.\label{eq:4x}
\end{align}
where the output is $D_t$.
Although, the values of $B_t$, $D_t$, $u_t$ and $\mathcal{X}_t$ are between 0 and 1, the extreme cases of being 0 or 1 are less common. Having this fact in mind and considering an example of the nonlinear functions of the flexibility function, $f(\mathcal{X})$, $g(u)$, and $\ell(f(\mathcal{X})+g(u))$, in Figure 1, one can assume that $f(\mathcal{X})$, $g(u)$, and $\ell(f(\mathcal{X})+g(u))$ behave linearly in the range $[\epsilon, 1-\epsilon]$ for some $\epsilon$, $0\leq \epsilon < 1$. One can find the slope of $f(\mathcal{X})$, $g(u)$, and $\ell(f(\mathcal{X})+g(u))$ around a point in $[\epsilon, 1-\epsilon]$, as $\eta_1$, $\eta_2$, and $\eta_3$, respectively. Also, the biases for $f(\mathcal{X})$ and $g(u)$ can be considered as $\lambda_1$ and $\lambda_2$, respectively. Using the linearized version of the mentioned functions, (\ref{eq:1x})--(\ref{eq:3x}) can be rewritten as
\begin{align}
\hspace{-0.3cm} d\mathcal{X}_t/dt&=\frac{\Delta}{C}\eta_3(\eta_1\mathcal{X}_t+\lambda_1+\eta_2u_t+\lambda_2)\notag \\
&\times (\vmathbb{1}_{\delta_t>0}(1-B_t)+\vmathbb{1}_{\delta_t<0}B_t),\label{eq:1xx}\\
\delta_t&=\eta_3(\eta_1\mathcal{X}_t+\lambda_1+\eta_2u_t+\lambda_2),\label{eq:2xx}\\
D_t&=B_t+\delta_t\Delta(\vmathbb{1}_{\delta_t>0}(1-B_t)+\vmathbb{1}_{\delta_t<0}B_t).\label{eq:3xx}
%Y_t&=D_t.\label{eq:4xx}
\end{align}
The state equation (\ref{eq:1xx}) can be rewritten as the following piecewise defined function
\begin{align}\label{eq:5}
&\hspace{-0.2cm}d\mathcal{X}_t/dt =\notag  \\
&\hspace{-0.3cm}\left\{
        \begin{array}{ll}
            \frac{\Delta}{C}\eta_3(\eta_1\mathcal{X}_t+\eta_2u_t+\lambda_3)(1-B_t), &  \eta_3(\eta_1\mathcal{X}_t+\eta_2u_t+\lambda_3)>0, \\
            \frac{\Delta}{C}\eta_3(\eta_1\mathcal{X}_t+\eta_2u_t+\lambda_3)B_t, &  \eta_3(\eta_1\mathcal{X}_t+\eta_2u_t+\lambda_3)<0,
        \end{array}
    \right.
\end{align}
where $\lambda_3 = \lambda_1 + \lambda_2$. This implies that the state and output dynamics can be written in the form of a linear time-varying system as
\begin{align}
d\mathcal{X}_t/dt&=a_t\mathcal{X}_t + b_t u_t + d_t,\label{eq:6} \\
D_t &= B_t + Ca_t\mathcal{X}_t + Cb_t u_t + Cd_t,\label{eq:6xc}
\end{align}
where $a_t$ and $b_t$ are defined as
\begin{equation}\label{eq:7}
a_t = \left\{
        \begin{array}{ll}
            \frac{\Delta}{C}\eta_3\eta_1(1-B_t), & \ \eta_3(\eta_1\mathcal{X}_t+\eta_2u_t+\lambda_3)>0, \\
            \frac{\Delta}{C}\eta_3\eta_1B_t, & \ \eta_3(\eta_1\mathcal{X}_t+\eta_2u_t+\lambda_3)<0,
        \end{array}
    \right.
\end{equation}
\begin{equation}\label{eq:8}
b_t = \left\{
        \begin{array}{ll}
            \frac{\Delta}{C}\eta_3\eta_2(1-B_t), & \ \eta_3(\eta_1\mathcal{X}_t+\eta_2u_t+\lambda_3)>0, \\
            \frac{\Delta}{C}\eta_3\eta_2B_t, & \ \eta_3(\eta_1\mathcal{X}_t+\eta_2u_t+\lambda_3)<0,
        \end{array}
    \right.
\end{equation}
and
\begin{equation}\label{eq:8x}
d_t = \left\{
        \begin{array}{ll}
            \frac{\Delta}{C}\eta_3\lambda_3 (1-B_t), & \ \eta_3(\eta_1\mathcal{X}_t+\eta_2u_t+\lambda_3)>0, \\
            \frac{\Delta}{C}\eta_3\lambda_3 B_t, & \ \eta_3(\eta_1\mathcal{X}_t+\eta_2u_t+\lambda_3)<0.
        \end{array}
    \right.
\end{equation}
If $\eta_3(\eta_1\mathcal{X}_t+\eta_2u_t+\lambda_3)=0$, then $a_t = b_t = d_t=0$.
\begin{rem}\label{rem:1}
    By design, the functions $f$ and $g$ are monotonously decreasing and $\ell$ is monotonously increasing. Therefore, $\eta_1<0$, $\eta_2<0$, and $\eta_3>0$ and $a_t$ and $b_t$ are negative for all $t\geq 0$. Furthermore, $\lambda_1$ and $\lambda_2$ are positive scalars. The negativity of $a_t$ is important to the stability analysis of the linearized flexibility function. Also, having information of the sign of $b_t$ is required for the adaptive flexibility function design and will be utilized in Section \ref{sec:AFF}.
\end{rem}

\section{OPTIMAL PRICE GENERATOR}\label{sec:OPG}
Suppose that the scalar $C$ is known and that the values of the scalar functions $a_t$ and $b_t$ are known for all $t\geq 0$. Then, by setting $D_t = D_{ref_t}$ and isolating the input in \eqref{eq:6xc}, it can be shown that applying the price signal
\begin{align}\label{eq:9}
    u_t = \frac{1}{Cb_t}\left( -Ca_t\mathcal{X}_t-Cd_t-B_t+D_{{ref}_t}\right)
\end{align}
to (\ref{eq:6}) ensures that the output of the flexibility function, $D_t$, is equal to the reference demand signal, $D_{{ref}_t}$. It is noted that $Cb_t$ has a nonzero value for all $t\geq 0$. However, (\ref{eq:9}) may not be feasible since it does not consider the restrictions of $u_t$, that is, $u_t\in [0,\ 1]$. To this end, an optimization problem needs to be solved, so that minimizes $u_t - \frac{1}{Cb_t}( -Ca_t\mathcal{X}_t-Cd_t-B_t+D_{{ref}_t})$.

\begin{rem}\label{rem:2}
    Assume that the values of $a_t$, $b_t$, and $C$ are known for all $t\geq 0$, and that the signals $D_{{ref}_t}$ and $B_t$ are provided for all $t\geq 0$. Then, the optimization problem 
    \begin{align}\label{eq:10}
        min_{u_t} \ \ &\int_{t_k}^{t_{k+1}}(Cb_tu_t + Ca_t \mathcal{X}_t + Cd_t + B_t - D_{{ref}_t})^2,\notag \\
        d\mathcal{X}_t/dt&=a_t\mathcal{X}_t + b_t u_t + d_t,\notag \\
        u_t &\in [0,\ 1],
    \end{align}
    finds the optimal price signal for $t\in [t_k, \ t_{k+1})$. It is noted that $u_t$ is considered constant throughout the interval $[t_k, \ t_{k+1})$.
\end{rem}

\begin{rem}\label{rem:2x}
    Once a daily demand is purchased by an aggregator, and the daily baseline demand is provided, the bounds of integral in the optimization problem (\ref{eq:10}) can be extended and an optimal price signal can be calculated for the whole day.
\end{rem}

The procedure for implementing the proposed method of Section \ref{sec:linFF} is given in Implementation procedure 1.

\begin{algorithm}\label{alg:1}
\caption{Price signal generation algorithm for linearized flexibility function}%\label{alg:cap}
%\begin{algorithmic}
\noindent
\textbf{-Given $D_{{ref}_t}$, $B_t$, $C$, $\eta_1$, $\eta_2$, $\eta_3$, $\lambda_1$ and $\lambda_2$}

\textbf{-Calculate $\mathcal{X}_t$ using (\ref{eq:1x})} 

%\textbf{-Implement the control structure (\ref{eq:9})}

\textbf{-Solve the optimization algorithm (\ref{eq:10})}
\end{algorithm}

%Next section introduces an adaptive mechanism to find $u_t$ when $a_t$ and $b_t$ are unknown.  

\begin{figure}
\label{fig:stable}
\centering
\includegraphics[width=0.5\textwidth]{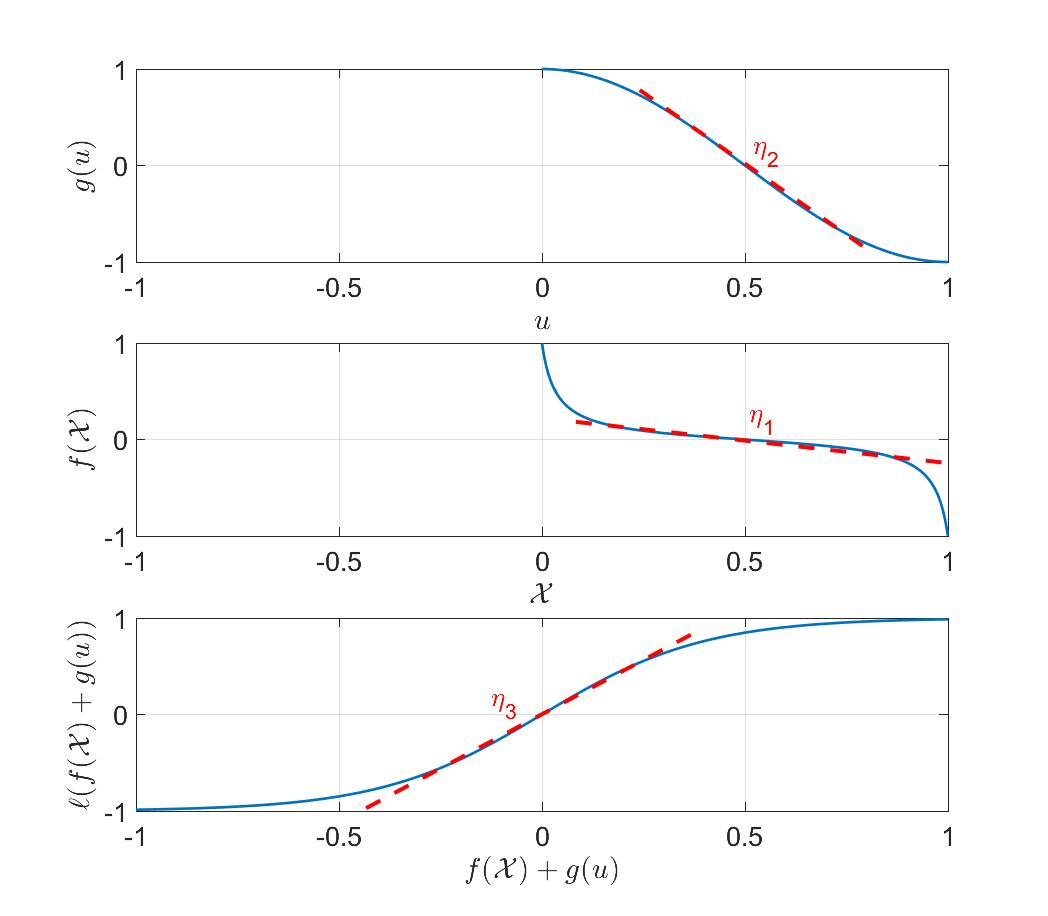}
\caption{Schematic of nonlinear functions of the flexibility function and their linear approximations.}
\end{figure}

%%%%%%%%%%%%%%%%%%%%%%%%%%%%%%%%%%%%%%%%%%%%%%%%%%%%%%%%%%%%%%%%%%%%%%%%%%%%%%%%%%%
\section{ADAPTIVE FLEXIBILITY FUNCTION} \label{sec:AFF}

In this section, we describe an approach for computing a control signal, $u_t$, when $a_t$ and $b_t$ are unknown, such that the demand, $D_t$, converges to its reference value, $D_{{ref}_t}$. Assume that $a_t$ and $b_t$ are piecewise constant.

Rewrite the state dynamics of the linearized flexibility function (\ref{eq:6}) as
\begin{align}
d\mathcal{X}_t/dt&=a\mathcal{X}_t + b (u_t + \bar{d}),\label{eq:6x}
%D_t &= B_t + Ca_t\mathcal{X}_t + Cb_t u_t + Cd_t,
\end{align}
where $\bar{d}=\lambda_3/ \eta_2$.

Let the reference dynamics be defined as 
\begin{align}
d\mathcal{Y}_t/dt&= \lambda \mathcal{Y}_t + \frac{1}{C}r_t,\label{eq:11}
\end{align}
where $\mathcal{Y}_t$ is the state of the reference dynamics and $r_t=D_{{ref}_t}-B_t$. Notice that the dynamics (\ref{eq:11}) is selected such that $\mathcal{Y}_t$ mimics the behaviour of (\ref{eq:1x}) when $D_t = D_{{ref}_t}$. This can be done by choosing a negative $\lambda$. The negativity of $\lambda$ ensures that the reference dynamics is stable.

A controller has to be designed such that it captures the changes of $r_t$ and $\mathcal{X}$, and generates a stabilizing control signal. Thus, we employ the control law
\begin{align}
u_t=\hat{\alpha}_t\mathcal{X}_t + \hat{\beta}_t r_t + \hat{\zeta}_t ,\label{eq:12}
\end{align}
where $\hat{\alpha}_t$, $\hat{\beta}_t$ and $\hat{\zeta}_t$ are control gains. With the control law (\ref{eq:12}), the closed loop dynamics can be written as
\begin{align}
d\mathcal{X}_t/dt&=(a+b\hat{\alpha}_t)\mathcal{X}_t + b\hat{\beta}_t r_t+ b(\hat{\zeta}_t+\bar{d} ).\label{eq:13}
\end{align}
If the flexibility function parameters were known, the ideal parameters could be calculated by comparing the closed-loop dynamics and the reference dynamics, i.e., $\alpha^* =\frac{\lambda-a}{b} $, $\beta^* = \frac{1}{bC}$ and $\zeta^* = -\bar{d}$.

By defining the error as $e_t = \mathcal{X}_t - \mathcal{Y}_t$, the error dynamics can be obtained as
\begin{align}
de_t/dt&=(a+b\hat{\alpha}_t)\mathcal{X}_t + b\hat{\beta}_t r_t + b(\hat{\zeta}_t+\bar{d}) -\lambda \mathcal{Y}_t - \frac{1}{C}r_t.\label{eq:14x}
\end{align}
Defining the parameter errors as $\tilde{\alpha}_t = \hat{\alpha}_t-\alpha^*$, $\tilde{\beta}_t = \hat{\beta}_t-\beta^*$ and $\tilde{\zeta}_t = \hat{\zeta}_t-\zeta^*$, the error dynamics (\ref{eq:14x}) can be rewritten as
\begin{align}
de_t/dt&=\lambda e_t + b\tilde{\alpha}_t\mathcal{X}_t +b\tilde{\beta}_tr_t + b\tilde{\zeta}_t.\label{eq:14}
\end{align}

In order to keep the parameters of the adaptive system bounded, a projection algorithm can be used. Here, we first define the projection algorithm, and then, introduce two useful lemmas in this regard.

\begin{defin}
    The projection operator, denoted as $ \text{Proj} $, for two scalars $ \theta $ and $Y$ is defined as 
\begin{align} \label{eq:17n}
&\text{Proj}(\theta, Y)\equiv \left \{\begin{array}{l}Y-Yh(\theta)\ \ if\ h(\theta)>0 \  \&  \ Y\left( \frac{dh(\theta)}{d\theta}\right) >0\\  Y \ \ \ \ \ \ \ \ \ \ \ \ \ \ \ \ \  \text{otherwise},\end{array}\right.
\end{align}
where $h(.):\mathbb{R}\rightarrow \mathbb{R}$ is a convex function defined as
\begin{equation} \label{eq:18n}
h(\theta)=\frac{(\theta-\theta_{min}-\varepsilon_{\theta})(\theta-\theta_{max}+\varepsilon_{\theta})}{(\theta_{max}-\theta_{min}-\varepsilon_{\theta})\varepsilon_{\theta}},
\end{equation}
where $ \varepsilon_{\theta} $ is the projection tolerance that should be chosen as $0<\varepsilon_{\theta}<0.5(\theta_{max}-\theta_{min})$. Also, $\theta_{max}$ and $\theta_{min}$ are the upper and lower bound of $\theta$. These bounds also form the projection boundary. In the convex function (\ref{eq:18n}), $h(\theta)=0$ when $\theta=\theta_{max}-\varepsilon_{\theta}$ or $\theta=\theta_{min}+\varepsilon_{\theta}$, and $h(\theta)=1$ when $\theta=\theta_{max}$ or $\theta_{min}$.

%Lemma1%%%%%%%%%%%%%%%%%%%%%%%%%%%%%%%%%%%%%%%%%%%%%%%
\begin{lemma}\label{lem:1}
 If $\dot{\theta}=\text{Proj}(\theta,Y)$ with initial conditions $\theta(0)\in \Omega_{\theta}=\{\theta\in \mathbb{R}|h(\theta)\leq 1\}$, where $h(\theta):\mathbb{R}\rightarrow \mathbb{R}$ is a convex function, then $\theta\in \Omega_{\theta}$ for $\forall t\geq 0$.   
\end{lemma}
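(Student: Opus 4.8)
The plan is to treat $h$ itself as a barrier function and to show that $\Omega_\theta=\{\theta\in\mathbb{R}:h(\theta)\le1\}$ is positively invariant under $\dot\theta=\text{Proj}(\theta,Y)$. I would begin with the elementary properties of $h$: because $0<\varepsilon_\theta<0.5(\theta_{max}-\theta_{min})$, we have $\theta_{max}-\theta_{min}-\varepsilon_\theta>0$, so the quadratic $h$ has a positive leading coefficient, is strictly convex, and attains its minimum at the midpoint of $[\theta_{min},\theta_{max}]$; since $h(\theta_{min})=h(\theta_{max})=1$, it follows that $\Omega_\theta=[\theta_{min},\theta_{max}]$, a compact interval, and that $t\mapsto h(\theta_t)$ is absolutely continuous along any solution of $\dot\theta=\text{Proj}(\theta,Y)$.

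The core step is to evaluate $\tfrac{d}{dt}h(\theta_t)=\tfrac{dh}{d\theta}\dot\theta=\tfrac{dh}{d\theta}\,\text{Proj}(\theta_t,Y)$ on the region $\{\theta:h(\theta)\ge1\}$. There $h(\theta)\ge1>0$, so by \eqref{eq:17n} the second branch is selected precisely when $Y\tfrac{dh}{d\theta}\le0$, giving $\text{Proj}(\theta_t,Y)=Y$ and $\tfrac{d}{dt}h(\theta_t)=\tfrac{dh}{d\theta}Y\le0$. In the complementary case $Y\tfrac{dh}{d\theta}>0$, the first branch gives $\text{Proj}(\theta_t,Y)=Y(1-h(\theta))$, hence $\tfrac{d}{dt}h(\theta_t)=\bigl(\tfrac{dh}{d\theta}Y\bigr)(1-h(\theta))\le0$, since the first factor is positive and $1-h(\theta)\le0$. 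Thus $\tfrac{d}{dt}h(\theta_t)\le0$ at every time with $h(\theta_t)\ge1$, and this sign conclusion needs no bound on $Y$.

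I would then close the argument with a first-exit-time contradiction. Suppose $h(\theta_{t_1})>1$ for some $t_1>0$, and set $t_0=\sup\{t\in[0,t_1]:h(\theta_t)\le1\}$, which is well defined since $h(\theta_0)\le1$; by continuity $h(\theta_{t_0})=1$, $t_0<t_1$, and $h(\theta_t)>1$ for all $t\in(t_0,t_1]$. On $[t_0,t_1]$ the trajectory stays in $\{h\ge1\}$, so by the previous step $\tfrac{d}{dt}h(\theta_t)\le0$ for a.e.\ $t\in[t_0,t_1]$, whence $h(\theta_{t_1})=h(\theta_{t_0})+\int_{t_0}^{t_1}\tfrac{d}{dt}h(\theta_t)\,dt\le1$, contradicting $h(\theta_{t_1})>1$. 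Hence $h(\theta_t)\le1$, i.e.\ $\theta_t\in\Omega_\theta$, for all $t\ge0$.

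The step I expect to be the main obstacle is not the computation but fixing the solution concept: $\text{Proj}(\cdot,Y)$ is only piecewise defined and is discontinuous where $Y\tfrac{dh}{d\theta}$ changes sign inside $\{0<h<1\}$, so $\dot\theta=\text{Proj}(\theta,Y)$ must be read in the Carath\'{e}odory (or Filippov) sense, and one has to check that the almost-everywhere differentiation of $h\circ\theta$ and the bound $\tfrac{d}{dt}h(\theta_t)\le0$ survive for such solutions. Since that bound is required only on $\{h\ge1\}$, where both admissible values of the projection yield $\tfrac{d}{dt}h(\theta_t)\le0$, the interior discontinuity does no harm; making this reduction explicit is the one place that warrants care.
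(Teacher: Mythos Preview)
Your argument is correct. The paper itself gives no proof of this lemma; it simply refers the reader to \cite{Lavretsky13}, so there is no in-paper argument to compare against. Your barrier-function computation---showing $\tfrac{d}{dt}h(\theta_t)\le0$ whenever $h(\theta_t)\ge1$ by checking both branches of \eqref{eq:17n}, then closing with a first-exit-time contradiction---is exactly the standard argument found in that reference.

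One minor remark on the obstacle you flag: for the specific quadratic $h$ in \eqref{eq:18n}, the map $\theta\mapsto\text{Proj}(\theta,Y)$ is in fact continuous in $\theta$. The two branches agree on the switching surface $h(\theta)=0$ (where $Y-Yh(\theta)=Y$), and on $\{h>0\}$ the condition $Y\,dh/d\theta=0$ forces $Y=0$ (since $dh/d\theta$ vanishes only at the vertex, where $h<0$), so both branches again coincide. Hence the solution-concept issue is milder than you suggest; Carath\'{e}odory solutions are needed only because $Y$ itself may be merely measurable in $t$, not because of any discontinuity in the projection.
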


\begin{proof}
    The proof of Lemma 1 can be found in \cite{Lavretsky13}.
\end{proof} 

%%%%%%%%%%%%%%%%%%%%%%%%%%%%%%%%%%%%%%%%%%%%%%%%%
\begin{lemma}\label{lem:2}
   For the ideal parameter $ \theta^*\in [\theta_{min}+\varepsilon_{\theta},\ \theta_{max}-\varepsilon_{\theta}] $, $ \theta\in \mathbb{R} $, $ Y\in \mathbb{R} $ and the projection algorithm in (\ref{eq:17n}) and (\ref{eq:18n}), the following inequality holds:
\begin{equation}\label{eq:e20}
\begin{array}{ll}
 ({\theta}^T-{{\theta}^*}^T) \big( -Y + \text{Proj}(\theta, Y) \big)  \leq 0.
\end{array}
\end{equation}

\end{lemma}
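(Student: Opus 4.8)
The plan is to argue case by case directly from the definition \eqref{eq:17n} of the projection operator, using only the convexity of $h$ in \eqref{eq:18n} and the hypothesis $\theta^*\in[\theta_{min}+\varepsilon_{\theta},\ \theta_{max}-\varepsilon_{\theta}]$. Since $\theta$, $\theta^*$, and $Y$ are scalars here, the transposes in \eqref{eq:e20} are vacuous, and it suffices to bound the scalar quantity $(\theta-\theta^*)\big(-Y+\text{Proj}(\theta,Y)\big)$.

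First I would dispose of the inactive branch. If $h(\theta)\leq 0$, or if $Y\,(dh/d\theta)\leq 0$, then $\text{Proj}(\theta,Y)=Y$, so $-Y+\text{Proj}(\theta,Y)=0$ and the left-hand side of \eqref{eq:e20} equals zero; the inequality holds with equality.

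The substantive case is $h(\theta)>0$ together with $Y\,(dh/d\theta)>0$, in which $-Y+\text{Proj}(\theta,Y)=-Yh(\theta)$, so I must show $(\theta-\theta^*)\,Y\,h(\theta)\geq 0$. Two facts drive this. First, $h(\theta^*)\leq 0$: in the factorization \eqref{eq:18n}, on the interval $[\theta_{min}+\varepsilon_{\theta},\ \theta_{max}-\varepsilon_{\theta}]$ the factor $(\theta^*-\theta_{min}-\varepsilon_{\theta})$ is nonnegative, the factor $(\theta^*-\theta_{max}+\varepsilon_{\theta})$ is nonpositive, and the denominator $(\theta_{max}-\theta_{min}-\varepsilon_{\theta})\varepsilon_{\theta}$ is positive because $0<\varepsilon_{\theta}<0.5(\theta_{max}-\theta_{min})$. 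Second, convexity of $h$ gives the supporting-line inequality $h(\theta^*)\geq h(\theta)+(dh/d\theta)(\theta)\,(\theta^*-\theta)$. Combining the two with $h(\theta)>0$ yields $(dh/d\theta)(\theta)\,(\theta-\theta^*)\geq h(\theta)-h(\theta^*)>0$, so $\theta-\theta^*$ has the same (nonzero) sign as $dh/d\theta$. By the activation hypothesis $Y\,(dh/d\theta)>0$, the scalar $Y$ has that same sign as well; hence $(\theta-\theta^*)\,Y>0$, and multiplying by $h(\theta)>0$ gives $(\theta-\theta^*)\,Y\,h(\theta)>0$, i.e. $(\theta-\theta^*)\big(-Y+\text{Proj}(\theta,Y)\big)<0$, which completes the argument.

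The only delicate point is the sign bookkeeping in the substantive case: one must verify that the convexity estimate, the membership $h(\theta^*)\leq 0$, and the activation condition $Y\,(dh/d\theta)>0$ all align the signs of $\theta-\theta^*$ and $Y$. I expect this to be the main, though fairly mild, obstacle; the scalar claim is standard and a vector-valued version is proved in \cite{Lavretsky13}. A useful geometric sanity check is that $h>0$ only outside $[\theta_{min}+\varepsilon_{\theta},\ \theta_{max}-\varepsilon_{\theta}]$, where the convex function $h$ is monotone and its derivative points away from $\theta^*$, which makes the required sign alignment transparent.
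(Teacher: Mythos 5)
Your proof is correct and complete: the inactive branch gives equality, and in the active branch the combination of $h(\theta^*)\leq 0$ (from the factorization of \eqref{eq:18n} and the positivity of its denominator) with the convexity supporting-line inequality does align the signs of $\theta-\theta^*$ and $Y$ exactly as you claim. Note that the paper itself offers no argument for this lemma, deferring entirely to \cite{Tohidi19}; your case analysis is precisely the standard scalar version of the convexity-based argument underlying that reference (and \cite{Lavretsky13}), so there is no substantive difference in approach --- you have simply made the deferred proof explicit.
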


\begin{proof}
    The proof of Lemma 2 can be found in \cite{Tohidi19}.
\end{proof}
\end{defin}
%%%%%%%%%%%%%%%%%%%%%%%%%%%%%%%%%%%%%%%%%%%%%%%%%%%%%    

The following theorem provides the main results of this paper. It provides the projection based adaptive laws along with stability analysis and convergence results.
\begin{thm} \label{thm:1}
    Consider the flexibility function dynamics (\ref{eq:6}) and the reference model (\ref{eq:11}), and assume that $a_t$ and $b_t$ are piecewise constant unknown parameters, but the sign of $b_t$ is considered to be known. Suppose that the price signal $u_t$, given in (\ref{eq:12}), is the control input of the flexibility function dynamics (\ref{eq:6})--(\ref{eq:6xc}) with the adaptive parameters, $\alpha_t$, $\beta_t$ and $\zeta_t$, that are updated using the following projection-based adaptive laws,
\begin{align}
d\hat{\alpha}_t/dt &= \gamma_{\alpha} \text{Proj}\left(\hat{\alpha}, -sgn(b_t)\mathcal{X}_te_t\right),\label{eq:15}\\
d\hat{\beta}_t/dt &= \gamma_{\beta} \text{Proj}\left(\hat{\beta}, -sgn(b_t)r_te_t\right),\label{eq:16}\\
d\hat{\zeta}_t/dt &= \gamma_{\zeta} \text{Proj}\left(\hat{\zeta}, -sgn(b_t)e_t\right),\label{eq:16xy}
\end{align}
where the projection operator ‘‘Proj’’ is defined in (\ref{eq:17n}), with convex function $h \in C^1$ in (\ref{eq:18n}), and $\gamma_{\alpha}$, $\gamma_{\beta}$ and $\gamma_{\zeta}$ are three positive adaptation gains. Then given any initial condition $e_0 \in \mathbb{R}$, $\alpha_0 \in \Omega_{\alpha}$, $\beta_0 \in \Omega_{\beta}$, $\zeta_0 \in \Omega_{\zeta}$, and $\alpha^* \in [\alpha_{min}+\varepsilon_{\alpha},\ \alpha_{max}-\varepsilon_{\alpha}]$, $\beta^* \in [\beta_{min}+\varepsilon_{\beta},\ \beta_{max}-\varepsilon_{\beta}]$ and $\zeta^* \in [\zeta_{min}+\varepsilon_{\beta},\ \zeta_{max}-\varepsilon_{\beta}]$, $\tilde{\alpha}_t$, $\tilde{\beta}_t$ and $\tilde{\zeta}_t$ remain uniformly bounded for all $t \geq 0$ and $e_t$ converges to $0$ as $t\rightarrow \infty$. Furthermore, $u_t$ remains bounded and $D_t$ converges to $D_{{ref}_t}$.
\end{thm}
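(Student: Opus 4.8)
The plan is to use a standard Lyapunov argument for model-reference adaptive control, adapted to the piecewise-constant-parameter setting and the projection operator. First I would introduce the candidate Lyapunov function
\begin{align}
V_t = \tfrac{1}{2}e_t^2 + \tfrac{|b|}{2\gamma_\alpha}\tilde{\alpha}_t^2 + \tfrac{|b|}{2\gamma_\beta}\tilde{\beta}_t^2 + \tfrac{|b|}{2\gamma_\zeta}\tilde{\zeta}_t^2,\notag
\end{align}
which is positive definite and radially unbounded in the combined error variable $(e_t,\tilde{\alpha}_t,\tilde{\beta}_t,\tilde{\zeta}_t)$. Differentiating along the error dynamics \eqref{eq:14} and using $|b| = sgn(b)b$, the cross terms $b\tilde{\alpha}_t\mathcal{X}_te_t$, $b\tilde{\beta}_tr_te_t$, $b\tilde{\zeta}_te_t$ coming from $e_t\,de_t/dt$ should cancel against the contributions $\tfrac{|b|}{\gamma}\tilde{\theta}_t\,d\hat{\theta}_t/dt$ \emph{up to} the terms introduced by the projection. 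Concretely, writing $d\hat{\alpha}_t/dt = \gamma_\alpha(-sgn(b)\mathcal{X}_te_t) + \gamma_\alpha\big(\mathrm{Proj}(\hat{\alpha},-sgn(b)\mathcal{X}_te_t)-(-sgn(b)\mathcal{X}_te_t)\big)$ and similarly for $\hat{\beta}$, $\hat{\zeta}$, the nominal parts cancel and one is left with
\begin{align}
\dot{V}_t \leq \lambda e_t^2 + |b|\,\tilde{\alpha}_t\big(\mathrm{Proj}(\hat{\alpha},Y_\alpha)-Y_\alpha\big) + |b|\,\tilde{\beta}_t\big(\mathrm{Proj}(\hat{\beta},Y_\beta)-Y_\beta\big) + |b|\,\tilde{\zeta}_t\big(\mathrm{Proj}(\hat{\zeta},Y_\zeta)-Y_\zeta\big),\notag
\end{align}
where $Y_\alpha = -sgn(b)\mathcal{X}_te_t$, etc. By Lemma \ref{lem:2}, each of the three projection terms is $\leq 0$ (using $\tilde{\theta}_t = \hat{\theta}_t - \theta^*$ and the hypothesis that each ideal gain lies strictly inside its projection interval), so $\dot{V}_t \leq \lambda e_t^2 \leq 0$ since $\lambda < 0$.

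From $\dot{V}_t \leq 0$ I get that $V_t$ is nonincreasing, hence bounded, which immediately gives uniform boundedness of $e_t$, $\tilde{\alpha}_t$, $\tilde{\beta}_t$, $\tilde{\zeta}_t$ (equivalently $\hat{\alpha}_t,\hat{\beta}_t,\hat{\zeta}_t$ stay in their compact sets $\Omega_\alpha,\Omega_\beta,\Omega_\zeta$, which also follows directly from Lemma \ref{lem:1}). Boundedness of $\hat{\alpha}_t$, $\hat{\beta}_t$, $\hat{\zeta}_t$, together with boundedness of $\mathcal{X}_t$ and $r_t$ (the state lives in $[0,1]$ and $r_t=D_{{ref}_t}-B_t$ is a difference of bounded signals), yields boundedness of the price signal $u_t$ through its definition \eqref{eq:12}. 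For convergence of $e_t$, I would integrate $\dot{V}_t \leq \lambda e_t^2$ to obtain $|\lambda|\int_0^\infty e_t^2\,dt \leq V_0 - \lim V_t < \infty$, so $e_t \in \mathcal{L}_2$; combined with $e_t \in \mathcal{L}_\infty$ and $\dot{e}_t \in \mathcal{L}_\infty$ (the latter from \eqref{eq:14} and the already-established boundedness of all signals on the right-hand side), Barbalat's lemma gives $e_t \to 0$ as $t \to \infty$. Finally, since $D_t - D_{{ref}_t} = C\,d\mathcal{X}_t/dt - r_t = C(a\mathcal{X}_t + bu_t + d) - r_t$ equals, after substituting the control law, precisely $C(\dot{\mathcal{Y}}_t + \dot{e}_t) - r_t = C\dot{e}_t + (\lambda C\mathcal{Y}_t + r_t) - r_t$... more cleanly: $D_t - D_{{ref}_t} = C\dot{\mathcal{X}}_t - C\dot{\mathcal{Y}}_t = C\dot{e}_t$, and $\dot{e}_t \to 0$ follows from \eqref{eq:14} once $e_t\to 0$ and the parameter errors are bounded — actually one needs $\dot e_t \to 0$, which requires a small additional argument, so instead I would note $D_t - D_{{ref}_t} = Ca_t e_t + \cdots$; more directly, from \eqref{eq:6xc} and \eqref{eq:11}, $D_t - D_{{ref}_t} = C(a\mathcal X_t + bu_t + d + \tfrac{1}{C}B_t) - D_{{ref}_t}$, and using $a\mathcal X_t + bu_t + d = \dot{\mathcal X}_t = \dot e_t + \dot{\mathcal Y}_t = \dot e_t + \lambda \mathcal Y_t + \tfrac 1C r_t$, this collapses to $D_t - D_{{ref}_t} = C\dot e_t + C\lambda \mathcal Y_t + r_t + B_t - D_{{ref}_t} = C\dot e_t + C\lambda\mathcal Y_t$. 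Hmm, that does not vanish; the cleanest route is simply $D_t = B_t + C\dot{\mathcal X}_t$ from \eqref{eq:1x}, $D_{{ref}_t} = B_t + C\dot{\mathcal Y}_t$ from \eqref{eq:11}, so $D_t - D_{{ref}_t} = C\dot e_t$, and then invoke that $\dot e_t \to 0$, which I would obtain by differentiating \eqref{eq:14} or, more safely, by applying Barbalat to $\dot e_t$ after checking $\ddot e_t \in \mathcal L_\infty$.

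\textbf{Main obstacle.} The principal subtlety is handling the \emph{piecewise-constant} nature of $a_t$ and $b_t$: the Lyapunov function $V_t$ depends on $b$ through the weighting $|b|/\gamma$ and through the ideal gains $\alpha^*,\beta^*,\zeta^*$, so at each switching instant $V_t$ jumps and the parameter errors $\tilde\theta_t$ change discontinuously. I would address this by working on each interval of constancy, showing $\dot V_t \le 0$ there, and then arguing the jumps are finite (or, if switches are finitely many / eventually cease, that the analysis applies on the final interval); one must also ensure the projection bounds $\theta_{\min},\theta_{\max}$ are chosen so that $\alpha^*,\beta^*,\zeta^*$ remain in the strict interior \emph{for every} realized value of $(a,b)$, which is where Remark \ref{rem:1} (sign of $b$ known, $a<0$) and boundedness of the physical parameters are used. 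A secondary technical point is the rigorous application of Barbalat's lemma to conclude $e_t \to 0$ and then $D_t \to D_{{ref}_t}$, which needs uniform continuity of $e_t^2$ (equivalently $\dot e_t \in \mathcal L_\infty$) — this is routine given the boundedness established earlier, but it is the step that actually delivers the convergence claim rather than mere stability.
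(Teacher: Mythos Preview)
Your proposal is essentially the paper's own proof: the same Lyapunov function \eqref{eq:17}, the same use of Lemma~\ref{lem:2} to obtain $\dot V_t\le \lambda e_t^2$, the same $\mathcal L_2\cap\mathcal L_\infty$-plus-Barbalat route to $e_t\to 0$, and the same appeal to the projection bounds together with boundedness of $\mathcal X_t$ and $r_t$ for boundedness of $u_t$. The paper dispatches the final step $D_t\to D_{ref_t}$ more tersely---it simply asserts that $\mathcal Y_t$ was \emph{designed} to mimic \eqref{eq:1x} with $D_t=D_{ref_t}$, so $\mathcal X_t\to\mathcal Y_t$ suffices---and, like you, it does not explicitly treat the jumps in $V$ (or in $\alpha^*,\beta^*,\zeta^*$) at the switching instants of $(a_t,b_t)$; your flagging of that as the main subtlety is well placed.
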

\begin{proof}
    Consider the candidate Lyapunov function
\begin{align}
V(e_t,&\tilde{\alpha}_t,\tilde{\beta}_t,\tilde{\zeta}_t)\notag \\
&=\frac{1}{2}e_t^2+\frac{1}{2\gamma_{\alpha}}|b_t|\tilde{\alpha}_t^2+\frac{1}{2\gamma_{\beta}}|b_t|\tilde{\beta}_t^2 +\frac{1}{2\gamma_{\zeta}}|b_t|\tilde{\zeta}_t^2.\label{eq:17}
\end{align}
The time derivative of (\ref{eq:17}) along the trajectories of (\ref{eq:14}) and (\ref{eq:15})–(\ref{eq:16xy}) can be calculated as
\begin{align}
&dV(e_t,\tilde{\alpha}_t,\tilde{\beta}_t,\tilde{\zeta}_t)/dt\notag \\
&=\lambda e_t^2 + \left(b_t\mathcal{X}_te_t+|b_t|\text{Proj}\left(\hat{\alpha}, -sgn(b_t)\mathcal{X}_te_t\right)\right)\tilde{\alpha}_t\notag \\
&+ \left(b_tr_te_t+|b_t|\text{Proj}\left(\hat{\beta}, -sgn(b_t)r_te_t\right)\right)\tilde{\beta}_t\notag \\
&+ \left(b_te_t+|b_t|\text{Proj}\left(\hat{\zeta}, -sgn(b_t)e_t\right)\right)\tilde{\zeta}_t\notag \\
&=\lambda e^2 + |b_t|\left(sgn(b_t)\mathcal{X}_te_t+\text{Proj}\left(\hat{\alpha}, -sgn(b_t)\mathcal{X}_te_t\right)\right)\tilde{\alpha}_t\notag \\
&+ |b_t|\left(sgn(b_t)r_te_t+\text{Proj}\left(\hat{\beta}, -sgn(b_t)r_te_t\right)\right)\tilde{\beta}_t\notag \\
&+ |b_t|\left(sgn(b_t)e_t+\text{Proj}\left(\hat{\zeta}, -sgn(b_t)e_t\right)\right)\tilde{\zeta}_t.\label{eq:18}
\end{align}
The inequality (\ref{eq:e20}), introduced in Lemma \ref{lem:2}, implies that
\begin{align}
&dV(e_t,\tilde{\alpha}_t,\tilde{\beta}_t, \tilde{\zeta}_t)/dt\leq \lambda e^2\leq 0.\label{eq:19}
\end{align}
The negativity of $dV/dt$ implies that $e_t$, $\tilde{\alpha}_t$, $\tilde{\beta}_t$ and $\tilde{\zeta}_t$ are bounded, which causes $de_t/dt$ to be bounded as well. It also implies that 
\begin{align}
\int_0^t|\lambda|e_t^2dt \leq -\int_0^t (dV/dt)\ dt=V_0-V_t\leq V_0, \label{eq:20}
\end{align}
for all $t\geq 0$, which shows that $e_t\in \mathcal{L}_2$. Given that $e_t \in \mathcal{L}_2 \cap \mathcal{L}_{\infty}$, and $de_t/dt \in \mathcal{L}_{\infty}$, and using Barbalat's lemma, one can confirm that $\lim_{t\rightarrow \infty}e_t = 0$. Therefore, $\mathcal{X}_t$ converges to $\mathcal{Y}_t$, and since $\mathcal{Y}_t$, by design, follows the trajectories of (\ref{eq:1x}) with $D_t = D_{{ref}_t}$, the same holds for $\mathcal{X}_t$. Therefore, the adaptive price signal $u_t$, with the adaptation laws (\ref{eq:15}) and (\ref{eq:16}), leads to the convergence of $D_t$ to $ D_{{ref}_t}$. Moreover, by knowing the range of change of $\mathcal{X}_t$ and $r_t$, and by the selection of the upper and lower bounds of the projection algorithm one can guarantee that $u_t$ is bounded and is kept in the range $[0,\ 1]$. 
\end{proof}

\begin{rem}\label{rem:3}
    As described in Remark \ref{rem:1}, the sign of $b_t$ is always negative by design. Thus, the adaptation laws (\ref{eq:15})--(\ref{eq:16xy}) can be rewritten as
    \begin{align}
d\hat{\alpha}_t/dt &= \gamma_{\alpha} \text{Proj}\left(\hat{\alpha}, \mathcal{X}_te_t\right),\label{eq:15x}\\
d\hat{\beta}_t/dt &= \gamma_{\beta} \text{Proj}\left(\hat{\beta}, r_te_t\right),\label{eq:16x}\\
d\hat{\zeta}_t/dt &= \gamma_{\zeta} \text{Proj}\left(\hat{\zeta}, e_t\right).\label{eq:16xyz}
\end{align}
\end{rem}

\begin{rem}
    The projection bounds should be selected such that $u_t$ is bounded in the range $[0,\ 1]$. Suppose that the reference demand is always greater than or equal to the baseline demand, by choosing $\alpha_{min}=0$, $\alpha_{max}=1/3$, $\beta_{min}=0$, $\beta_{max}=1/3$, $\zeta_{min}=0$ and $\zeta_{max}=1/3$, one can limit $u_t$ in the range $[0,\ 1]$.
\end{rem}

\begin{rem}\label{rem:4}
    In order to implement the adaptation laws (\ref{eq:15x})--(\ref{eq:16xyz}), one requires to know the values of $\mathcal{X}_t$. Having the demand or its estimation, $D_t$, the baseline, $B_t$ and the flexibility capacity, $C$, $\mathcal{X}_t$ can be calculated using (\ref{eq:1x}) as $\mathcal{X}_t = \int_{0}^{t}\frac{1}{C}(D_t-B_t)dt$.
\end{rem}

\begin{rem}\label{rem:5}
    Every day, an aggregator purchases a specific amount of energy for each hour, $\bar{D}_{ref}$. If an estimate of the hourly demand, $\bar{D}$, of a price-responsive energy system is available along with the hourly baseline, $\bar{B}$, one can implement the proposed adaptive approach and find the hourly price signal throughout each day. This hourly price signal can then be used in a model predictive controller or an energy management system (EMS).
\end{rem}

A block diagram of the proposed adaptive flexibility function is shown in Figure 2. In addition, the procedure for implementing the proposed method of Theorem \ref{thm:1} is given in Implementation procedure 2.

\begin{figure}%[H]
\label{fig:2}
\vspace{0.5cm}
\centering
\includegraphics[width=0.49\textwidth]{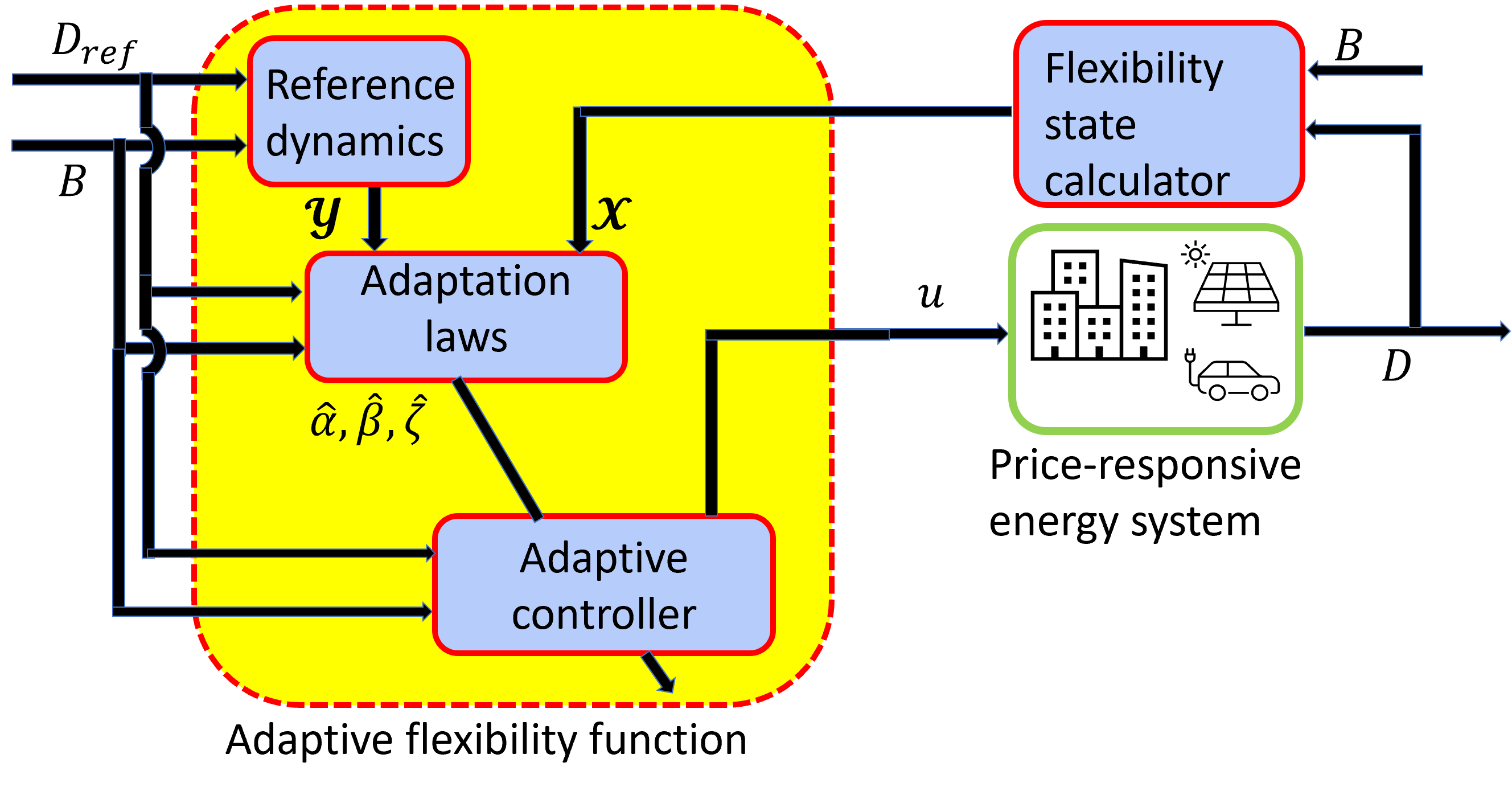}
\caption{Block diagram of the proposed adaptive flexibility function.}
\end{figure}

\begin{algorithm}\label{alg:2}
\caption{Adaptive price signal generation algorithm based on the linearized flexibility function}%\label{alg:cap}
%\begin{algorithmic}
\noindent
\textbf{-Given $D_t$, $D_{{ref}_t}$, $B_t$, $C$}

\textbf{-Set $\lambda$, $\gamma_{\alpha}$, $\gamma_{\beta}$ and $\gamma_{\zeta}$}

\textbf{-Set the projection algorithm parameters $\zeta_{\alpha}$, $\zeta_{\beta}$, $\alpha_{min}$, $\alpha_{max}$, $\beta_{min}$, $\beta_{max}$, $\zeta_{min}$ and $\zeta_{max}$}

\textbf{-Provide the reference dynamics $\mathcal{Y}_t$ using (\ref{eq:11})}

\textbf{-Calculate $\mathcal{X}_t$ using Remark \ref{rem:4}}

\textbf{-Calculate the error $e_t = \mathcal{X}_t - \mathcal{Y}_t$}

\textbf{-Implement the adaptation laws (\ref{eq:15x})--(\ref{eq:16xyz}) and find $\hat{\alpha}_t$, $\hat{\beta}_t$ and $\hat{\zeta}_t$}

\textbf{-Employ the control (\ref{eq:12})}

\end{algorithm}

%%%%%%%%%%%%%%%%%%%%%%%%%%%%%%%%%%%%%%%%%%%%%%%%%%%%%%%%%%%%%%%%%%%%%%%%%

\begin{figure}%[H]
\label{fig:2x}
\centering
\includegraphics[width=0.5\textwidth]{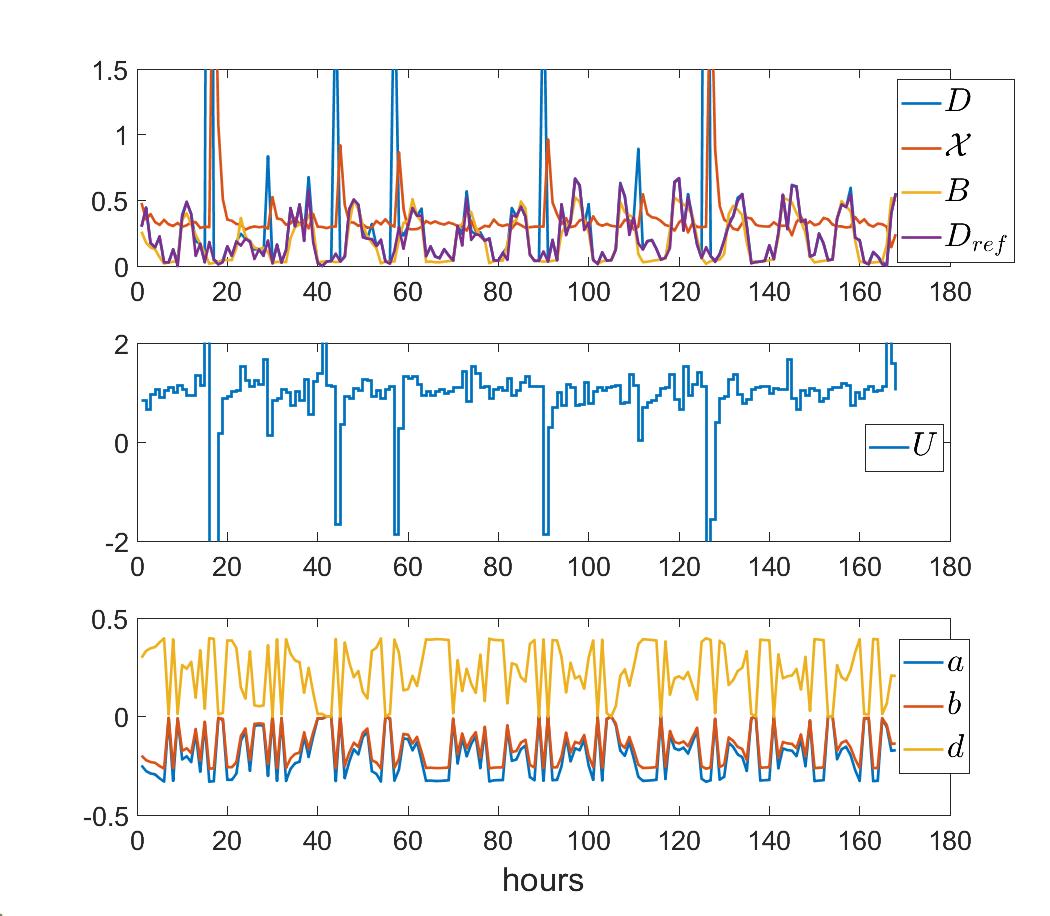}
\caption{Price signal generation using the controller (\ref{eq:9}) without bounds on the price signal.}
\end{figure}

\begin{figure}%[H]
\label{fig:3}
\centering
\includegraphics[width=0.5\textwidth]{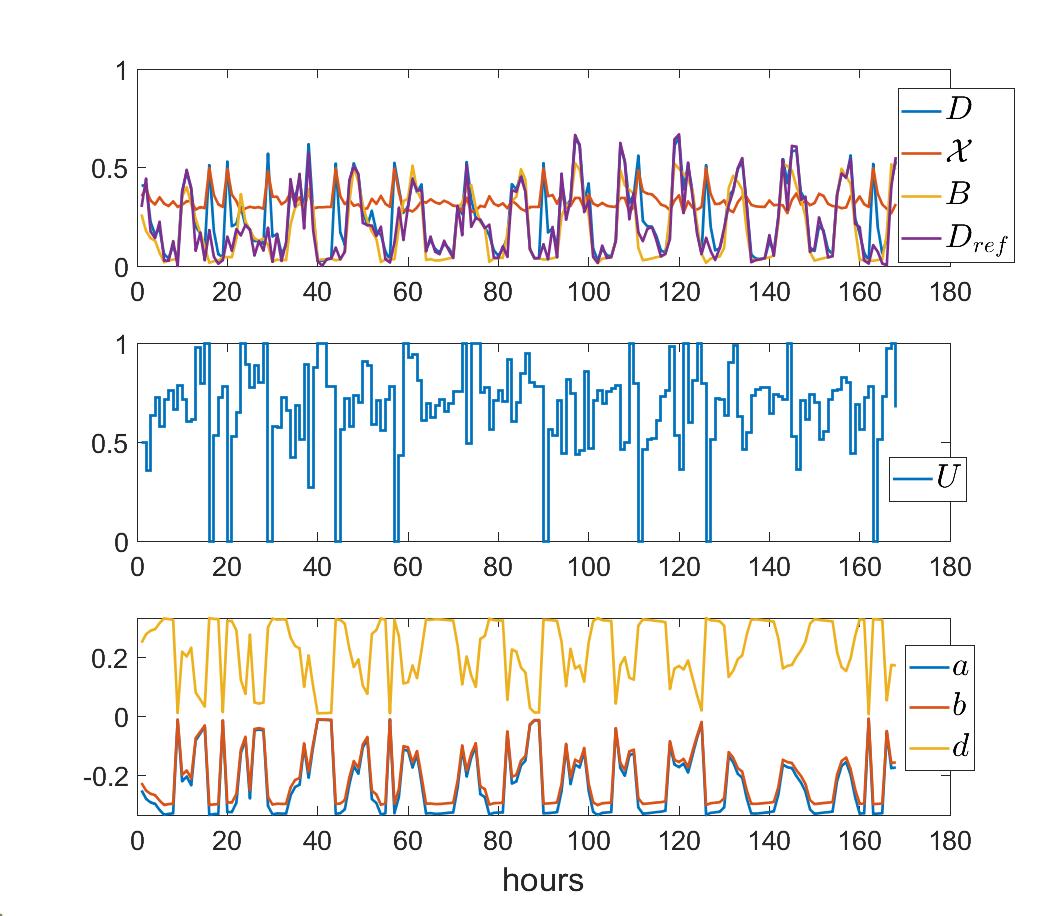}
\caption{Price signal generation using the controller (\ref{eq:9}) with bounds on the price signal.}
\end{figure}

\begin{figure}
\label{fig:4}
\centering
\includegraphics[width=0.5\textwidth]{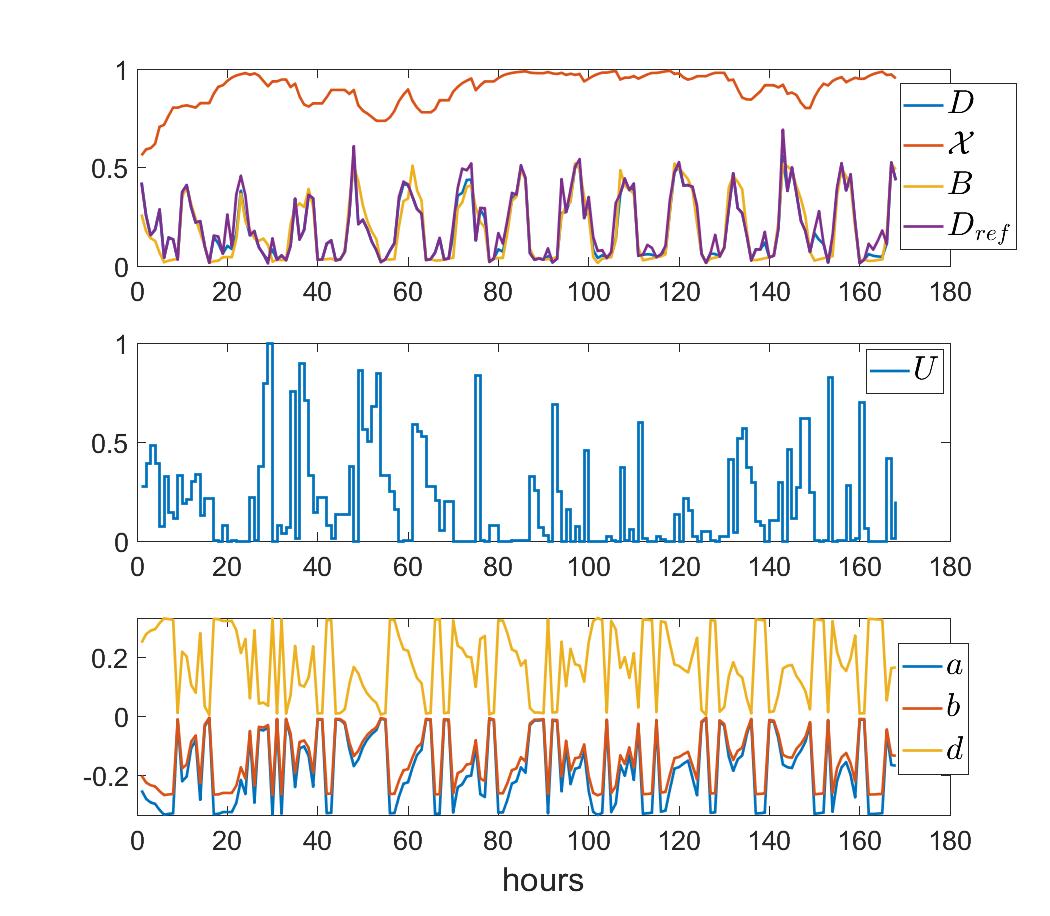}
\caption{Price signal generation using Implementation procedure 1.}
\end{figure}

\begin{figure}
\label{fig:5}
\centering
\includegraphics[width=0.5\textwidth]{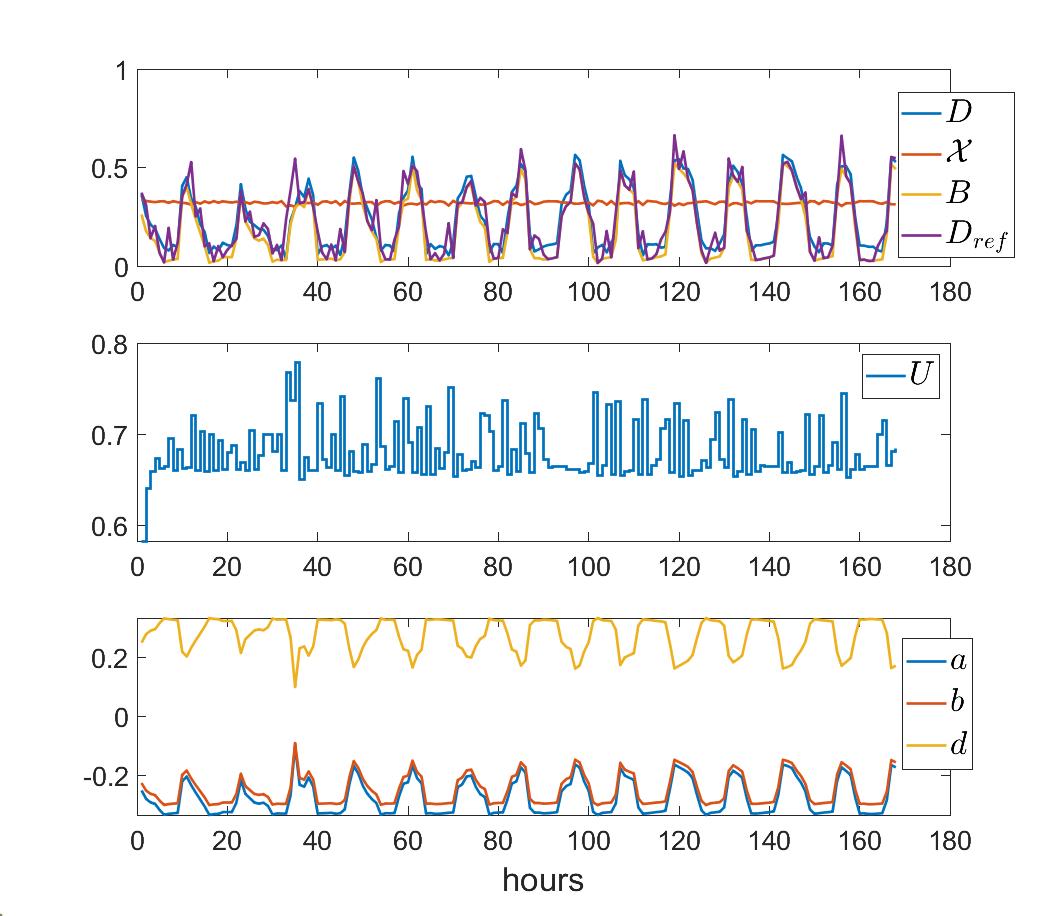}
\caption{Price signal generation using Implementation procedure 2.}
\end{figure}

%%%%%%%%%%%
\section{SIMULATION RESULTS}\label{sec:sim}
The linearized flexibility function is utilized to demonstrate the effectiveness of the proposed approaches. The parameters of the linearized flexibility function are $\eta_1 = -1$, $\eta_2 = -0.9$, $\eta_3 = 1$, $\lambda_1 = 0.5$, $\lambda_2 = 0.5$, $C=2.97$ and $\Delta = 1$.  

The first scenario considers all of the parameters to be known. Thus, we follow Implementation procedure 1. First, we implement (\ref{eq:9}) without solving optimization or any bound on the price signal. Figure 3 demonstrates the results of the linearized flexibility function. The top panel shows the flexibility state, $X$, its output, $D$, baseline signal, $B$, and the reference demand, $D_{ref}$. It is seen that $D$ follows $D_{ref}$ conveniently. However, there are some mismatch between these two at the switching times, i.e. when the sign of $\delta_t$ changes. Even at the switching times, it is seen that the control mechanism recovers and follows the reference demand shortly. The middle panel illustrates the generated price signal. It is seen that the generated price signal is not limited between 0 and 1. The third panel shows the time-varying parameters of the linearized flexibility function.

Figure 4 follows the first scenario without solving an optimization problem but with a software limitation. It is seen in the middle panel of this figure that the control signal, price signal, is limited between 0 and 1, using a software limitation. It is seen that bounding the price signal does not cause instability. It even pushes the state and output of the linearized flexibility function to the prespecified limit of 0 and 1, as can be seen in the top panel. Also, the demand follows its reference. The third panel shows the time-varying parameters of the linearized flexibility function.

The results of Implementation procedure 1, with optimization problem (\ref{eq:10}) is illustrated in Figure 5. Top panel of this figure, shows the results of the flexibility state, its output, the baseline signal, and the reference demand. It is seen that the demand  follows the reference demand. The middle panel shows the price signal. The optimization problem finds the optimal solution while considering the constraints. The time-varying parameters of the linearized system is demonstrated in the bottom panel.

The results of Implementation procedure 2, where the parameters of the flexibility function are not known, is illustrated in Figure 6. Top panel of this figure, shows the results of the flexibility state, its output, the baseline signal, and the reference demand. It is seen that demand  follows the reference demand. The middle panel shows the price signal. The optimization problem finds the optimal solution while considering the constraints. The time-varying parameters of the linearized system is demonstrated in the bottom panel.

%%%%%%%%%%%%%%%%%%%%%%%%%%%%%%%%%%%%%%%%%%%%%%%%%%%%%%%%%%%%%%%%%%%%%%%%%%%%%%%%%%%
\section{CONCLUSIONS}\label{sec:conc}
An adaptive flexibility function based on adaptive model reference controller structure is proposed in this paper. The method utilizes the linearized price-demand mapping and generates an adaptive price signal to diminish the difference between the demand and the reference demand in an energy system. The proposed method considers price signal constraints using projection algorithm. Furthermore, the method needs neither uncertainty identification nor persistence of excitation assumption. This property along with the adaptation capability simplifies offering the flexibility services e.g. in a plug-and-play manner and without the need to conduct a manual, customized modeling-and-control study for each resource separately. 
Simulation results show the effectiveness of the proposed method.

%\addtolength{\textheight}{-12cm}   % This command serves to balance the column lengths
                                  % on the last page of the document manually. It shortens
                                  % the textheight of the last page by a suitable amount.
                                  % This command does not take effect until the next page
                                  % so it should come on the page before the last. Make
                                  % sure that you do not shorten the textheight too much.

%%%%%%%%%%%%%%%%%%%%%%%%%%%%%%%%%%%%%%%%%%%%%%%%%%%%%%%%%%%%%%%%%%%%%%%%%%%%%%%%

%%%%%%%%%%%%%%%%%%%%%%%%%%%%%%%%%%%%%%%%%%%%%%%%%%%%%%%%%%%%%%%%%%%%%%%%%%%%%%%%

%%%%%%%%%%%%%%%%%%%%%%%%%%%%%%%%%%%%%%%%%%%%%%%%%%%%%%%%%%%%%%%%%%%%%%%%%%%%%%%%
%\section*{APPENDIX}

\section*{ACKNOWLEDGMENT}
This work is supported by Sustainable plus energy neighbourhoods (syn.ikia) (H2020 No. 869918), ELEXIA (Horizon Europe No. 101075656),  ARV (H2020 101036723), SEM4Cities (IFD Project No. 0143-0004), IEA EBC - Annex 81 - Data-Driven Smart Buildings (EUDP Project No. 64019-0539), and IEA EBC - Annex 82 - Energy Flexible Buildings Towards Resilient Low Carbon Energy Systems (EUDP Project No. 64020-2131).

%\begin{thebibliography}{99}

%\end{thebibliography}

\bibliographystyle{ieeetr}
\bibliography{main}

\end{document}